\def\qed{\hfill$\Box$}
\titlespacing{\section}{0pt}{12pt}{6pt}
\titlespacing{\subsection}{0pt}{12pt}{6pt}
\titlespacing{\subsubsection}{0pt}{10pt}{0pt}
\newtheorem{definition}{Definition}
\newtheorem{proposition}{Proposition}
\newcommand{\Rea}{{\mathbb R}}
\newcommand{\Ss}{{\mathbb S}}
\newcolumntype{C}[1]{>{\centering\let\newline\\\arraybackslash\hspace{0pt}}m{#1}}
\title{\Large \textbf{Fair Cost Allocation for Ridesharing Services -- Modeling, Mathematical Programming and an Algorithm to Find the Nucleolus}} 
\author{Wei Lu\thanks{Amazon.com, Inc., \url{theweilu@amazon.com}. This work was done prior to the author joining Amazon.}  and Luca Quadrifoglio \thanks{Zachry Department of Civil Engineering, Texas A\&M University, \url{lquadrifoglio@civil.tamu.edu }, the corresponding author. }}
\date{December 31, 2018}  
\begin{document}
\begin{spacing}{1.2}
\maketitle
\begin{abstract}


This paper addresses one of the most challenging issues in designing an efficient and sustainable ridesharing service: ridesharing market design. We formulate it as a fair cost allocation problem through the lens of the cooperative game theory. A special property of the cooperative ridesharing game is that its characteristic function values are calculated by solving an optimization problem. Several concepts of fairness are investigated and special attention is paid to a solution concept named nucleolus, which aims to minimize the maximum dissatisfaction in the system. Due to its computational intractability, we break the problem into a master-subproblem structure and two subproblems are developed to generate constraints for the master problem. We propose a coalition generation procedure to find the nucleolus and approximate nucleolus of the game. Experimental results showed that when the game has a non-empty core, in the approximate nucleolus scheme the coalitions are computed only when it is necessary and the approximate procedure produces the actual nucleolus. And when the game has an empty core, the approximate nucleolus is close to the actual one. Regardless of the emptiness of the game, our algorithm needs to generate only a small fraction ($1.6\%$) of the total coalition constraints to compute the approximate nucleolus. The proposed model and results nicely fit systems operated by autonomous vehicles. 
\\

\end{abstract}
\pagebreak

\section{Introduction}
Ridesharing services, whose aim is to gather travelers with similar itineraries and compatible schedules, are able to provide substantial environmental and social benefits through reducing the use of private vehicles. When the operations of a ridesharing system are optimized, it can also save travelers a significant amount of transportation cost. The economic benefits associated with ridesharing in turn attract more travelers to participate in ridesharing services and thereby improve the utilization of transportation infrastructure capacity. 

An optimized ridesharing service is usually designed to minimize the system-wide travel cost. This is beneficial in the society's point of view, assuming each agent accepts the system's assignment. This is, however, a strong assumption considering agents might form their own ridesharing groups if they find doing so is more of their own interest. 

Recall that the agents of ridesharing system participate in this system in the hope of saving travel cost. So it is up to the ridesharing service provider to decide how the travel cost would be shared among customers after a ridesharing plan is proposed and accepted by the customers. This is a non-trivial task because if the agents find the cost allocation scheme unfair, they may leave the system and form their own ridesharing group in the long run. This fair cost-allocation situation is critical to the sustainability of a ridesharing system and thus is the motivation of the study in this paper.

The ridesharing cost allocation problem is modeled as a cooperative game. Cooperative game theory, due to its close relation to combinatorial optimization, has drawn significant attention of the operations research community. Since its introduction by \citep{vonNeumann1944}, cooperative game theory has developed several solution concepts that aim to resolve the benefits (cost) allocation issues among cooperative players. In this paper, we are primarily concerned with a particular cost allocation solution concept - the nucleolus. The nucleolus of a cooperative game has several nice properties. Intuitively, it is a solution to the cost allocation problem that minimizes the maximal dissatisfaction among the customers. 

The concept of nucleolus was first suggested by \citep{nucleolus1969} and since then was developed by \citep{ShapleyBalance} and \citep{Shapley1979Geometry}. Although the nucleolus has several game theoretic virtues, the computation of nucleolus is very difficult. In fact, for a n-player game, as the size of the characteristic function grows exponentially with the number of players, any enumeration algorithm that computes the nucleolus that requires the entire information of the characteristic function takes $O(2^n)$ time, assuming the characteristic function is readily available. Moreover, as will be shown in later section, finding the characteristic function value of ridesharing game involves solving an optimization problem related to the Traveling Salesman Problem (TSP), which is NP-hard itself. This means the computation of the nucleolus of ridesharing game can easily become intractable and more efficient algorithm needs to be developed. 

In this paper, we propose a nucleolus-finding algorithm for the ridesharing game by successively solving a number of linear and integer programs. The linear programming (LP) problem for nucleolus calculation was first studied by \cite{kopelowitz1967computation} and stimulated several LP-based algorithms for nucleolus computation. \cite{Dragan} suggested an algorithm for computing the nucleolus by generating the minimal balanced sets of the player set.  Our nucleolus-finding procedure combine the LP-based algorithm with the constraint generation framework proposed in \citep{Hallefjord}, such that the explicit information of the characteristic function of a ridesharing coalition is only computed when it is ``dissatisfied''. In this way the computational burden is significantly reduced. 

Note that the constraint generation approach was first proposed in~\citep{Gomory1961} and was successfully applied to solving the cutting stock problem. Utilizing a similar idea, \citep{Basic_VRG} studied the basic vehicle routing game (VRG) in which a fleet with homogeneous capacity are available. The authors analyzed the properties of this game and proposed a nucleolus-finding procedure based on coalition generation. \citep{Engevall2004} generalized the model of \citep{Basic_VRG} to consider vehicles with heterogeneous capacities and studied a real-world case based on their model. 


The recent prosperity of ridesharing services has spurred a growing attention from the research community. There are some game-theoretic studies, either from a cooperative or non-cooperative perspective, that focus on the mechanism and stability of ridesharing recommendations. \cite{Shen:2016} proposed an online ridesharing mechanism that satisfy ex-post incentive compatibility, individual rationality, and budget-balance in a non-cooperative context. \cite{7365485} designed a double auction based discounted trade reduction mechanism for dynamic ridesharing pricing that is individual rational, incentive compatible, budget balancing and has a larger trading volume. \cite{DBLP:journals/corr/GopalakrishnanM16} studied the costs and benefits of dynamic ridesharing by introducing the notion of sequential individual rationality and sequential fairness. \cite{wang_stable_2018} introduced the concept of stable matches, understanding and addressing the gap and trade-off between the wholistic optimal matchings of and the optimal matchings from individual rider's perspective. From a cooperative game theoretic perspective, \cite{BISTAFFA201786, Bistaffa:2015, Bistaffa:2015:SRF:2887007.2887092} tackled the coalition formation and payment allocation for the so-called social ridesharing problem, in which the feasible coalitions of a set of commuters are restricted by a social network represented by a graph. The authors focused on the solution concept of kernel-stable payments. 

To sum up, this paper advances the state of the art as the following. In contrast with the previous related work in which the existence of several restrictions (fixed driver/rider roles, possible coalitions limited by social network, etc.) significantly reduces the search space, our model has no such restrictions and therefore is more general. Second, to our best knowledge, our work is the first attempt that computes the nucleolus which is provably the most stable payment allocation scheme, compared to other concepts such as kernel and core, in the context of ridesharing cost allocation.

This paper is organized as follows. In Section~\ref{sec: formulation}, a formulation of the ridesharing cost allocation problem is developed from a game theory perspective and the properties of the characteristic function are analyzed. Section~\ref{sec: fairness} discusses the fairness issues in the ridesharing game regarding the core and the nucleolus. A coalition generation scheme is then developed to compute the nucleolus. The constraint generation subproblem is explicitly formulated by a mathematical formulation related to the ridesharing optimization problem. In Section~\ref{sec: exp}, numerical experiments are conducted and the performance of the proposed nucleolus procedure is evaluated. Finally, conclusions and future research ideas are presented in Section~\ref{sec: conclusions}.

\section{Ridesharing Optimization Problem From A Game Theory Perspective}
\label{sec: formulation}
Consider a set of ridesharing participants and denote it by $N$. Each participant wants to travel from her origin $s_i$ to her destination $t_i$. Each participant can potentially be the driver. Denote the capacity of a vehicle by $Q$. Consider the subsets of participants that do not exceed the vehicle capacity, i.e. $\mid S \mid\le Q$. For each such participant subset $s$, assume the \textit{feasible} route with minimum cost is known. Here by feasible it means the following conditions are met

\begin{enumerate}
	\item the route $r$ starts from an agent $d's$ origin and ends at his destination.
	\item Let $s_{-d} = s\backslash\{d\}$. For every agent $i\in s_{-d}$, $s_i$ precedes $t_i$ in $r$.
\end{enumerate}

Denote by $c_r$ the cost of such a feasible route and by $R$ the set of feasible routes with minimal cost. Let $a_{ir}=1$ if participant $i$ (both $s_i$ and $t_i$) is served by route $r$ and $0$ otherwise. The ridesharing optimization problem (RSP) can be formulated as

\begin{align}
	\label{obj}
	\text{(RSP)} \quad z = \quad \text{min} \quad &\sum_{r\in R} c_{r}x_{r}  \\
	\label{RSP: once}
	\text{s.t.} \quad &\sum_{r\in R} a_{ir}x_{r}= 1,  \quad i\in N \\
	\label{arriveDepot}
	&x_r \in \{0, 1\}, r\in R 
\end{align}

In the formulation $x_r=1$ if feasible route $r$ is selected and $0$ otherwise. Constraints \eqref{RSP: once} guarantee that each participant is covered by exactly one route. Note that the coefficient $c_r$ in the objective function is obtained by finding the minimal cost route that covers the participants for which $a_{ir}=1$, that is, by finding the solution to the corresponding TSP with precedence constraints. 

It is noted that this formulation is characterized by its large number of columns. Therefore, this formulation is practically solvable by a column generation solution method. Similar approaches were successfully applied to the vehicle routing problems (VRP)~\citep[see]{Balinski1964, VRPTW92}. When we solve the RSP with a column generation approach, it is of our interest to reduce the number of columns. We show this is possible as follows. 

We first introduce the definition of the \emph{profitable} ridesharing route. 

\begin{definition}[profitability]
	Denote by $r(S)$ the corresponding minimum cost feasible route of participant subset $S$. $r(S)$ is \emph{non-profitable} if there exists two non-empty subsets $S_1 \cup S_2 = S, S_1\cap S_2 = \emptyset$ such that $c_{r(S)} > c_{r(S_1)} + c_{r(S_2)}$. A route is defined profitable otherwise. 
\end{definition}

Intuitively, a shared-ride route becomes non-profitable if by ridesharing the participants end up spending more money on the transportation cost. 



The following proposition shows that we only need to consider a subset of the columns when solving RSP. 

\begin{proposition}
\label{prop: profitable}
	Let $X=\{x_{r_1}, x_{r_2}, \ldots, x_{r_m}\}$ be an optimal solution to RSP, i.e. $x_{r_i} = 1, i=1,\ldots, m$. Then $r_i, \forall i=1,\ldots, m$ must be a profitable route. 
\end{proposition}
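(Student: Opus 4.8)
The plan is to argue by contradiction: suppose some route $r_k$ in the optimal solution $X$ is non-profitable. By the definition of profitability, there exist two non-empty disjoint subsets $S_1, S_2$ with $S_1 \cup S_2 = S_k$ (where $S_k$ is the set of participants served by $r_k$) such that $c_{r(S_k)} > c_{r(S_1)} + c_{r(S_2)}$. The idea is then to construct a new feasible solution to RSP with strictly smaller objective value, contradicting optimality of $X$.

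First I would note that the routes $r(S_1)$ and $r(S_2)$ are themselves feasible routes in $R$ (the minimum-cost feasible routes for their respective participant subsets), since $|S_1| \le |S_k| \le Q$ and $|S_2| \le |S_k| \le Q$, so vehicle capacity is respected, and each is by definition a minimum-cost feasible route for a valid participant subset. Next I would define $X'$ from $X$ by setting $x_{r_k} = 0$ and $x_{r(S_1)} = x_{r(S_2)} = 1$ (adding these two columns), keeping all other entries unchanged. I then need to check that $X'$ still satisfies the covering constraints \eqref{RSP: once}: every participant in $S_k$ was covered exactly once by $r_k$ in $X$ and is now covered exactly once by exactly one of $r(S_1)$ or $r(S_2)$ in $X'$ (since $S_1, S_2$ partition $S_k$); participants outside $S_k$ are unaffected. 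One subtlety to address is whether $r(S_1)$ or $r(S_2)$ might already have been selected in $X$ — but since $X$ covers each participant exactly once and $S_1 \subseteq S_k$ is already covered by $r_k$, no route in $X$ other than $r_k$ covers any member of $S_1$ (or $S_2$), so in particular $r(S_1), r(S_2) \notin X \setminus \{r_k\}$ and no double-covering occurs. Finally, the objective value of $X'$ is that of $X$ minus $c_{r_k} = c_{r(S_k)}$ plus $c_{r(S_1)} + c_{r(S_2)}$, which by the non-profitability inequality is strictly less than the objective value of $X$ — contradicting optimality.

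The main obstacle, and the step deserving the most care, is the bookkeeping around feasibility of $X'$: specifically, confirming that adding $r(S_1)$ and $r(S_2)$ does not violate the equality constraints \eqref{RSP: once} because those columns were not already in use, and confirming these routes genuinely belong to the column set $R$. Both points are straightforward given the problem setup — $R$ contains the minimum-cost feasible route for every capacity-feasible participant subset — but they are where an incautious argument could slip. Everything else is immediate arithmetic from the definition of non-profitability.
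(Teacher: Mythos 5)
Your argument is correct and follows essentially the same route as the paper's own proof: assume a non-profitable route in the optimal solution, split its participant set into $S_1$ and $S_2$, swap in the two cheaper routes, and contradict optimality. The extra bookkeeping you supply (membership of $r(S_1), r(S_2)$ in $R$ and the absence of double-covering) is a welcome tightening of details the paper leaves implicit, but it does not change the approach.
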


\begin{proof}
	Proof by contradiction. Let $X=\{x_{r_1}, x_{r_2}, \ldots, x_{r_m}\}$ be an optimal solution to RSP. Suppose there exists $i^*$ such that $r_{i^*}$ is a non-profitable route. Let $S^*$ be the corresponding participants that are covered by this route. Then by definition there must be two non-empty subsets $S_1^* \cup S_2^* = S^*, S_1^*\cap S_2^* = \emptyset$ such that $c_{r(S^*)} > c_{r(S_1^*)} + c_{r(S_2^*)}$. Since all the customers that are covered by $r_{i^*}$ are also covered by $r(S_1^*)$ and $r(S_2^*)$, we can get a new feasible solution $X'$ to RSP by substituting $x_{r_{i^*}}=1$ with $x_{r(S_1^*)}=1$, $x_{r(S_2^*)}=1$ and $x_{r_{i^*}}=0$ while keep all the other $x$ variables unchanged. This feasible solution has a strictly less cost than $X$. Contradiction. \qed
\end{proof}

From a game theory perspective, we denote each ridesharing participant, $i\in N$, by a \textit{player} and each subset of participants, $S\subseteq N$, by a \textit{coalition}. 

The ridesharing cost allocation problem is the problem of finding a ``fair'' cost allocation scheme for the ridesharing optimization problem (RSP). 

A cooperative ridesharing game is defined by specifying a travel cost for each coalition. The game is defined by a ridesharing group $S$, and a \textit{characteristic function} $c(S):2^{S} \to \Rea$ from the set of all possible coalitions (sub-ridesharing group) of players in $S$ to a set of payment schemes satisfying $c(\O)=0$. Here $2^S$ denotes the power set of $S$. In the context of RSP game the characteristic function can be seen as the travel cost occurring if coalition $S\subseteq N$ is formed. Each coalition can be defined by a binary vector $s$ as 

\begin{equation*}
s_i = \begin{dcases}
		1, \quad \text{if customer $i$ is a member of the coalition}, \\
		0, \quad \text{otherwise},
	    \end{dcases}
\end{equation*}

Define $c$ as the objective value of a mathematical program. For all coalitions $S\subseteq N, S\neq \emptyset$, let $c(S)$ be the solution to the following mathematical program

\begin{align}
	\label{obj}
	c(S) = \quad \text{min} \quad &\sum_{r\in R} c_{r}x_{r}  \\
	\label{once}
	\text{s.t.} \quad &\sum_{r\in R} a_{ir}x_{r}= s_i,  \quad i\in N \\
	\label{arriveDepot}
	&x_r \in \{0, 1\}, \quad r\in R 
\end{align}

Intuitively, $c(S)$ represents the cost of an optimal route that covers the players in $S$, i.e. the players for which $s_i=1$. 




When studying a cooperative game, it is of great interest to study the properties of its characteristic function. Assuming that the singleton coalitions have a positive cost, we prove that the characteristic function of the RSP game is \emph{monotonic} and \emph{subadditive}. Interested readers can find the proof in the Appendix. It is noteworthy that subadditivity implies larger coalitions save more. So it is always beneficial to include more people to participant in ridesharing and this is a desirable property of the RSP game. 

Denote a coalition $S$ whose cardinality is smaller than the vehicle capacity ($\mid S\mid \le Q$) as a \textit{feasible coalition} and otherwise as an \textit{infeasible coalition}. Denote by $\Ss$ the set of feasible coalitions. Then we get
\begin{equation*}
	c(S) = c_r, \quad \forall r\in R \text{ and } S\in \Ss \text{ such that } a_{ir} = s_i, i \in N.
\end{equation*}

In addition, denote a coalition such that $\sum_{i\in S}c(i)\ge c(S)$ by a \textit{profitable coalition} and otherwise by a \textit{nonprofitable coalition}.  

\begin{figure}[ht]
\centering
\subfloat[Profitable Coalition, $c(S)=13$]{%
	\includegraphics[width=2in]{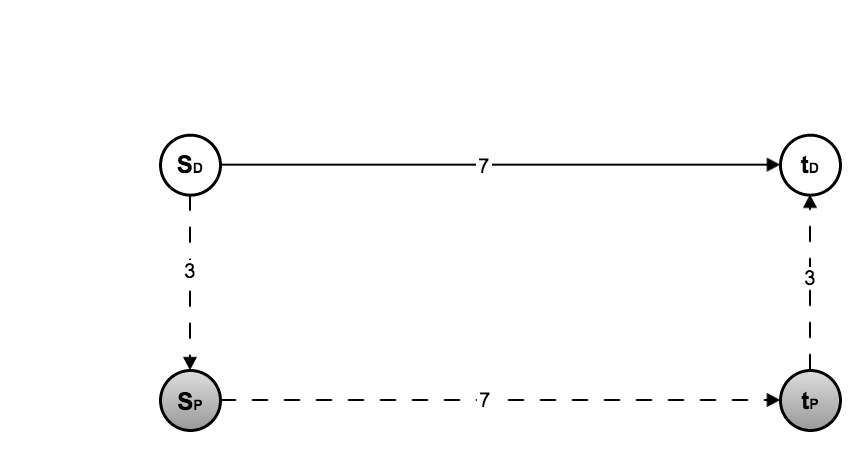}
	\label{p}}
\quad \quad
\subfloat[Nonprofitable Coalition, $c(S)=11$]{%
	\includegraphics[width=2in]{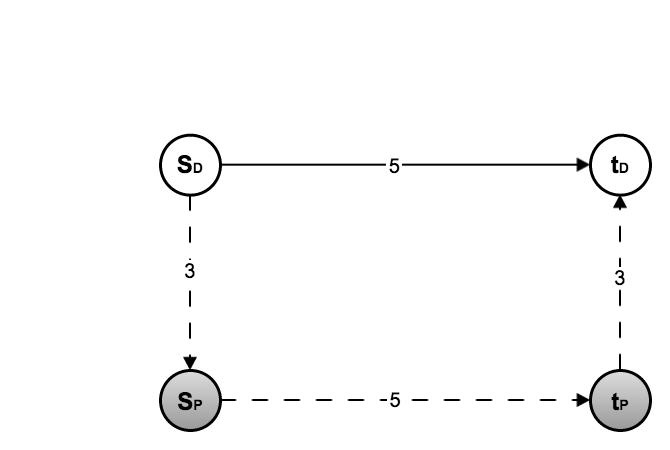}
	\label{np}}
%
\caption{Profitable vs. nonprofitable coalition}
\label{pvsnp}
\end{figure}

Figure~\ref{pvsnp} gives an example where forming a coalition will not always produce desirable results: instead of reducing total transportation cost as Figure~\ref{pvsnp}\subref{p}, Figure~\ref{pvsnp}\subref{np} actually increases the total cost, meaning it doesn't make much sense to form such a coalition. In this case the players are better off on their own. Note that the profitability of forming a coalition in a large extent depends on the relative geo-locations of the players.

\section{Fairness and Stability in RSP Game}
\label{sec: fairness}

\subsection{The core and the nucleolus}
\label{subsec: prop}
\subsubsection{The core}

Let $y_i$ be the cost allocated to agent $i$, $i\in N$. The \textit{core} of the RSP game is the set of the cost allocation plans $y$, such that 

\begin{align}
	\label{impu}
	\sum_{i\in N}y_i &= c(N), \\
	\label{CDI}
	\sum_{i\in S}y_i &\le c(S), \forall S\subset N.
\end{align}

The above inequalities can be interpreted as no single player or coalition should make a payment that is greater than their cost on their own. A cost allocation scheme that is in the core is a good allocation as no coalition has the incentive to leave the grand coalition. An inequality in~\eqref{CDI} is called a \textit{core defining inequality} (CDI). 





It is observed that the number of CDIs is in the scale of $O(2^N)$. As will be shown in later sections, in order to find the core and the nucleolus efficiently, it is important and of our great interest to reduce the number of CDIs. This is possible through the following propositions.   

\begin{proposition}
\label{prop: profitableCoalition}
	Any CDI with a nonprofitable coalition $S, S\neq N$, is not needed in \eqref{CDI}.
\end{proposition}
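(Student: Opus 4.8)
The plan is to show that for any nonprofitable coalition $S \neq N$, its core defining inequality $\sum_{i\in S} y_i \le c(S)$ is implied by the collection of singleton core defining inequalities $\{\, y_i \le c(i) : i \in S \,\}$, all of which are themselves retained in \eqref{CDI}. Consequently, removing the CDI associated with $S$ does not alter the polyhedron described by \eqref{impu}--\eqref{CDI}, which is precisely the meaning of ``not needed''.

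First I would recall the relevant definitions. By construction each singleton $\{i\}$ is a coalition with $c(\{i\}) = c(i)$, so the bound $y_i \le c(i)$ is exactly the CDI indexed by $\{i\} \subset N$; moreover every singleton is profitable, since $\sum_{j\in\{i\}} c(j) = c(i) = c(\{i\})$, so these singleton CDIs are not themselves candidates for elimination. On the other hand, ``$S$ nonprofitable'' is by definition the negation of $\sum_{i\in S} c(i) \ge c(S)$, i.e. $\sum_{i\in S} c(i) < c(S)$.

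Next, take any allocation $y$ satisfying all the constraints in \eqref{impu}--\eqref{CDI} except possibly the one indexed by $S$. In particular $y_i \le c(i)$ holds for every $i \in S$. Summing these inequalities over $i \in S$ yields
\[
\sum_{i\in S} y_i \;\le\; \sum_{i\in S} c(i) \;<\; c(S),
\]
so the CDI for $S$ is automatically (indeed strictly) satisfied. Hence it is redundant and may be dropped. The hypothesis $S \neq N$ is used only to guarantee that we are discussing an inequality constraint from \eqref{CDI} rather than the budget-balance equality \eqref{impu}; it plays no other role.

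I do not expect a genuine obstacle here — the argument is a one-line domination inequality. The one subtlety worth stating explicitly is that the CDIs used to dominate the one for $S$, namely the singleton CDIs, survive the reduction (they are profitable), so the elimination is well posed and non-circular. It is also worth remarking, as a by-product of the strict inequality above, that the CDI of a nonprofitable coalition can never be binding, which explains why it is irrelevant to the subsequent core and nucleolus computations.
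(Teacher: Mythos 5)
Your proof is correct and follows essentially the same route as the paper's: both arguments sum the singleton inequalities $y_i \le c(i)$ over $i \in S$ and invoke the definition of nonprofitability, $\sum_{i\in S} c(i) < c(S)$, to conclude that the CDI for $S$ is strictly implied and hence redundant. Your additional remarks (that singleton CDIs are themselves profitable and thus retained, and that a nonprofitable coalition's CDI can never be binding) are accurate clarifications the paper leaves implicit.
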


\begin{proof}
	Consider any nonprofitable coalition $\hat{S}$, $\hat{S}\neq N$. Denote by $\{1, 2, \ldots, m\}$ the players in $\hat{S}$. By definition of nonprofitable coalition we have $\sum_{i\in \hat{S}}c(i)< C(\hat{S})$. Note that all individual players are also singleton coalitions. It follows that 
	\begin{equation*}
		y_i\le c(i), \forall i\in \hat{S} \quad \Longrightarrow \quad \sum_{i\in \hat{S}} y_i\le \sum_{i\in \hat{S}} c(i)<c(\hat{S}). 
	\end{equation*}  \qed
\end{proof}

\begin{proposition}
\label{prop: feasible}
	For a RSP game with non-empty core, any CDI with an infeasible coalition $S, S\neq N$, is not needed in \eqref{CDI}.
\end{proposition}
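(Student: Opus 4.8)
\emph{Proof proposal.} The plan is to show that, for an infeasible coalition $S\subsetneq N$, the inequality $\sum_{i\in S}y_i\le c(S)$ is a consequence of the core-defining inequalities attached to the \emph{feasible} proper coalitions, so that deleting it from \eqref{CDI} leaves the core unchanged. The key device is to decompose $S$ by means of an optimal solution of the program that defines $c(S)$: since $\lvert S\rvert>Q$, any such solution necessarily splits $S$ into several smaller pieces, each of which is a feasible coalition and hence already carries a CDI in \eqref{CDI}.

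Concretely, I would proceed in three steps. First, let $x^\star$ be an optimal solution of the mathematical program defining $c(S)$ and let $r_1,\dots,r_k$ be the routes with $x^\star_{r_j}=1$. By constraint \eqref{once}, every player $i$ with $s_i=1$ is covered by exactly one $r_j$ and every player with $s_i=0$ by none, so the player sets $S_1,\dots,S_k$ served by $r_1,\dots,r_k$ form a partition of $S$. Since each route serves at most $Q$ players, $\lvert S_j\rvert\le Q$, so every $S_j$ is a non-empty \emph{feasible} coalition; and because $S_j\subseteq S\subsetneq N$ we also have $S_j\neq N$. (As $\lvert S\rvert>Q$ we in fact have $k\ge 2$, though this is not needed below.) Second, because $R$ consists of minimal-cost feasible routes and $r_j$ serves exactly the players of $S_j$, the characterization $c(S_j)=c_{r_j}$ valid for $S_j\in\Ss$ applies, giving $c(S)=\sum_{j=1}^{k}c_{r_j}=\sum_{j=1}^{k}c(S_j)$. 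Third, take any cost allocation $y$ satisfying \eqref{impu} and the CDIs of all feasible proper coalitions; applying the CDI to each $S_j$ and summing over the partition yields
\[
\sum_{i\in S}y_i=\sum_{j=1}^{k}\sum_{i\in S_j}y_i\le\sum_{j=1}^{k}c(S_j)=c(S),
\]
which is precisely the CDI for $S$. Hence that inequality is implied by the retained ones and may be removed.

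The only genuinely delicate point is the middle step, $c(S)=\sum_{j}c(S_j)$: one must argue that an optimal solution of the $c(S)$-program uses, on each block of the induced partition, a route of minimal cost for that block. This is immediate from the definitions---$R$ contains only minimal-cost feasible routes, so each selected $r_j$ is automatically a cheapest route covering $S_j$---but it is worth making explicit, since it is what ties the cost of an infeasible coalition back to the costs of feasible ones. (Note that subadditivity of $c$, proved in the Appendix, is the same phenomenon viewed from the other side: $c(S)\le\sum_{i\in S}c(i)$.) Finally, the hypothesis that the core is non-empty plays no role in the redundancy argument itself; it serves only to make the assertion ``not needed in \eqref{CDI}'' meaningful, i.e.\ to guarantee that the system \eqref{impu}--\eqref{CDI} whose description we are simplifying actually has solutions.
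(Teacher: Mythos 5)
Your proposal is correct and follows essentially the same route as the paper's proof: decompose the infeasible coalition via the optimal routes of its defining program into disjoint feasible sub-coalitions, observe that their costs sum to $c(S)$, and add the corresponding CDIs. You supply more justification than the paper does for the identity $c(S)=\sum_j c(S_j)$ and correctly note that the non-empty-core hypothesis is not actually used in the redundancy argument, but these are refinements of the same argument rather than a different one.
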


\begin{proof}
	Similar to \citep{Basic_VRG}, let $\hat{S}$, $\hat{S}\neq N$ be an infeasible coalition. Denote by $\{r_1,\ldots, r_m\}$ the corresponding optimal routes and $\{s_1, \ldots, s_m\}$ the disjoint feasible coalitions corresponding to the optimal routes. Since we have $\sum_{j=1}^m \sum_{i\in S_j} y_i = \sum_{i\in \hat{S}} y_i$ and $\sum_{j=1}^m c(S_j) = c(\hat{S})$, then we have the following 
	\begin{equation*}
		\sum_{i\in S_j} y_i\le c(S_j), \forall j=1,\ldots, m \quad \Longrightarrow \quad \sum_{i\in \hat{S}} y_i\le c(\hat{S}). 
	\end{equation*} 	\qed
\end{proof}

From Proposition~\ref{prop: profitableCoalition} and Proposition~\ref{prop: feasible} we have 
	\begin{equation*}
		C = \{ y \mid \sum_{i\in S} y_i \le c(S), S\in \mathbb{S}; \quad \sum_{i\in N} y_i = c(N) \}.
	\end{equation*}

Thus, when the core of the RSP game is non-empty, the only characteristic function values of our interests are those corresponding to profitable and feasible coalitions. This, as will be stated in later sections, reduces the size of the coalition-generating subproblem dramatically. Note that the calculation of $c(S)$ for a coalition $S$ is equivalent to solving the corresponding traveling salesman problem with pick-up and drop-off constraints (TSPPD) for the customers for which $s_i=1$. 

\subsubsection{The nucleolus}
In a cooperative ridesharing game, in which players share travel cost, allocations are the payments each player need to pay. That is to say, cooperative ridesharing game (RSP game) is a cost game.

Let $c:2^{S} \to \Rea$ denote the cost characteristic function of a cooperative ridesharing game. Then the function gives the amount of collective cost a group of players need to pay through forming a coalition. In an RSP game, the \emph{excess} of $y$ for a coalition $S\subseteq N$ is defined as $e(y, S) = c(S)-\sum_{i\in S}y_i$ and measures the amount of cost-saving of coalition $S$ in the allocation $y$, compared to $c(S)$. Note that when $e(y, S)$ is negative, it means the sum of the cost of $S$ in the allocation $y$ must exceed $c(S)$. Thus $e(y, S)$ measures the dissatisfaction of $S$ in the allocation $y$. Recall that the core is defined as the set of imputations such that $c(S)\ge \sum_{i\in S}y_i$ for all coalitions $S$, then we have that an imputation $y$ is in the core if and only if all its excesses are positive or zero. Denote by $\theta(y)\in \Rea^{2^N}$ the excess vector of $y$ whose elements $c(S)-\sum_{i\in S}y_i$ are arranged in non-decreasing order, that is, $\theta_i(y)\le \theta_j(y), \forall i<j$. Then a cost allocation vector $y$ is in the core if and only if it is efficient and $\theta_1(y)\ge 0$. Consider the lexicographic ordering of excess vectors: for two payment vectors $x, y$, we say $\theta(x)$ is lexicographically greater than $\theta(y)$ if $\exists k$ such that $\theta_i(x)=\theta_i(y), \forall i<k$ and $\theta_k(x)>\theta_k(y)$. Denote this ordering by $\theta(x)\succ \theta(y)$. 

\begin{definition}[Nucleolus] The nucleolus of a RSP game is the lexicographically maximal imputation. Denote the nucleolus by $y$ and let $\bar{Y}$ be the set of imputations, then we have
	\begin{equation*}
		\theta(y)\succeq \theta(y'), \quad \forall y'\in \overline{Y}\backslash y. 
	\end{equation*}
\end{definition}

Intuitively, the nucleolus is by definition the fairest cost allocation plan because it minimizes the maximal dissatisfaction of all the coalitions in the ridesharing system. Remember that a coalition is a subset of all the ridesharing groups. As a result, on the condition that the core is nonempty, the nucleolus is the \emph{center} of the core, and a ridesharing system that implements the nucleolus as the cost sharing plan is provably the most stable. It is of our interest to investigate the non-emptiness of the core of RSP game because the definitions of nucleolus and the core are related. In fact, the following example shows that the core of RSP game may be empty. 

\begin{figure}[h] 
\centering
\includegraphics[width=4in]{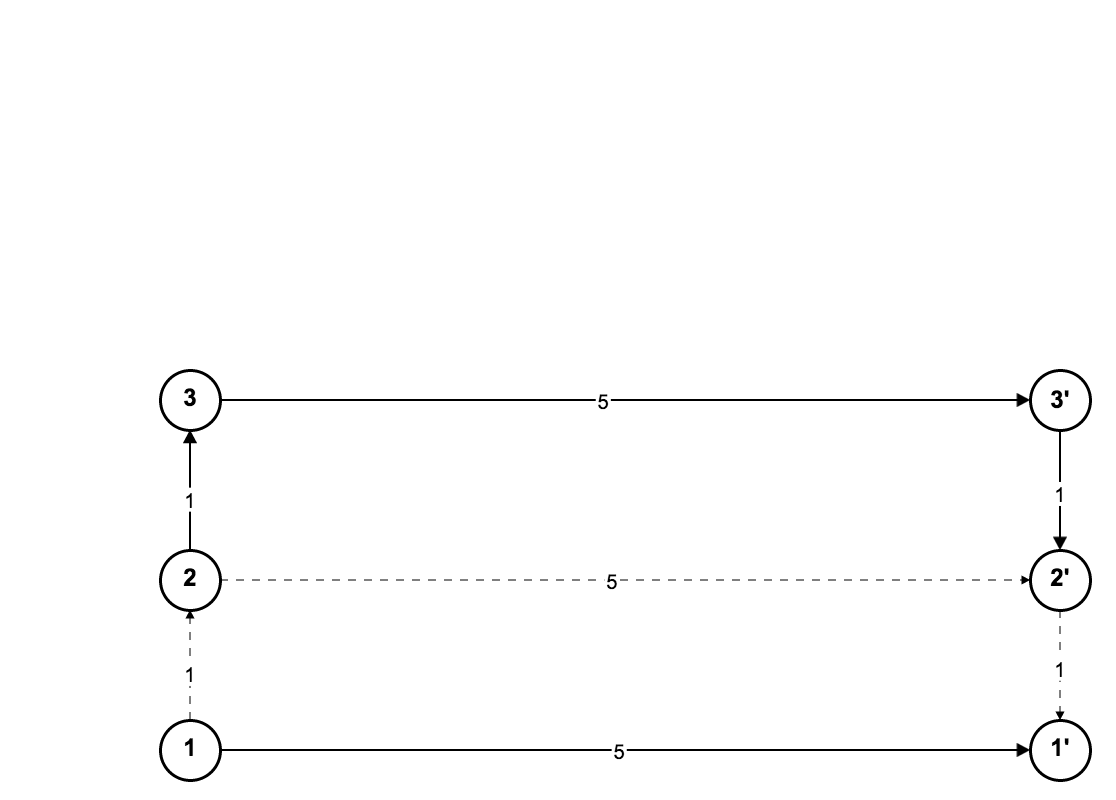}
\caption{A three player example}
\label{fig: empty_vs_nonempty}
\end{figure}


The transportation costs of three players $1,2,3$ are given in Figure~\ref{fig: empty_vs_nonempty}. Assuming that each player's vehicle has a capacity of one extra passenger seats, the characteristic function of this 3-player game is then defined by $c(\{1\}) = c(\{2\}) = c(\{3\}) = 5$, $c\{1,2\} = c(\{2,3\}) = 7$ (e.g., $1-2-2'-1'$), $ c(\{1,3\}) = 9$ (e.g., $1-3-3'-1'$) and $c(\{1,2,3\}) = 12$. The optimal route configuration is, for example, $2-3-3'-2'$ and $1-1'$. We show the calculation of nucleolus of this example in Table~\ref{cal: nucleolus}.

As an initial guess, we try $(4, 4, 4)$. In Table~\ref{cal: nucleolus}, we find that the minimum excess happens at coalition $(1, 2)$ and $(2, 3)$. These are the coalitions with maximum dissatisfaction. To improve this, we must increase both $y_1$ and $y_3$. This involves decreasing $y_2$, and will decrease the excess for $(1, 3)$ at the same rate. Also note that player $1$ and player $3$ have symmetrical roles in this game, thus we can conclude that the best scenario occurs when the excesses for $(1, 2)$, $(2, 3)$ and $(1, 3)$ are all equal. Solving the equations,
\begin{align*}
	&y_3-5 = y_1-5 = y_2-3 \\
	&y_1+y_2+y_3=12,
\end{align*}

we find the nucleolus of this game is $y = (4 \frac{2}{3}, 2\frac{2}{3}, 4 \frac{2}{3})$ and $C = \emptyset$ (note that $c(1,2)<y_1+y_2$).


\begin{table}[]
\centering
\caption{Calculation of nucleolus - empty core}
\label{cal: nucleolus}
\begin{tabular}{ccccc}
S       & c(S) & e(y, S)             & (4, 4, 4) & $(4\frac{2}{3}, 2\frac{2}{3}, 4\frac{2}{3})$ \\
1       & 5    & $5-y_1$             & 1         & $\frac{1}{3}$                                \\
2       & 5    & $5-y_2$             & 1         & $2\frac{1}{3}$                               \\
3       & 5    & $5-y_3$             & 1         & $\frac{1}{3}$                                \\
1 and 2 & 7    & $7-y_1-y_2 = y_3-5$ & -1        & $-\frac{1}{3}$                               \\
2 and 3 & 7    & $7-y_2-y_3 = y_1-5$ & -1        & $-\frac{1}{3}$                               \\
1 and 3 & 9    & $9-y_1-y_3 = y_2-3$ & 1         & $-\frac{1}{3}$                              
\end{tabular}
\end{table}

Interestingly, if we increase the capacity of the vehicles to two extra seat, then we obtain a game with a nonempty core. In this case, the characteristic function is defined in the following fashion. For the singleton coalitions and the 2-coalitions, characteristic function values remain the same. However, for the grand coalition $c(\{1,2,3\}) = 9$. The optimal route configuration is, for example, $1-2-3-3'-2'-1'$. The calculation of nucleolus of this example is shown in Table~\ref{cal: nucleolus-2}.

\begin{table}[]
\centering
\caption{Calculation of nucleolus - nonempty core}
\label{cal: nucleolus-2}
\begin{tabular}{ccccc}
S       & c(S) & e(y, S)             & (3, 3, 3) & $(\frac{11}{3}, \frac{5}{3}, \frac{11}{3})$ \\
1       & 5    & $5-y_1$             & 2         & 4/3                                         \\
2       & 5    & $5-y_2$             & 2         & 10/3                                        \\
3       & 5    & $5-y_3$             & 2         & 4/3                                         \\
1 and 2 & 7    & $7-y_1-y_2 = y_3-2$ & 1         & 5/3                                         \\
2 and 3 & 7    & $7-y_2-y_3 = y_1-2$ & 1         & 5/3                                         \\
1 and 3 & 9    & $9-y_1-y_3 = y_2$   & 3         & 5/3                          
\end{tabular}
\end{table}

Initially, we try $(3, 3, 3)$. In Table~\ref{cal: nucleolus-2}, we find that the minimum excess happens at coalition $(1, 2)$ and $(2, 3)$. These are the coalitions with maximum dissatisfaction. To improve this, we must increase both $y_1$ and $y_3$. This involves decreasing $y_2$, and will decrease the excess for $(1, 3)$ at the same rate. Also note that player $1$ and player $3$ have symmetrical roles in this game, therefore we can conclude that the best scenario we can achieve happens when the excesses for $(1, 2)$, $(2, 3)$ and $(1, 3)$ are all equal. Solving the equations,
\begin{align*}
	&y_3-2 = y_1-2 = y_2 \\
	&y_1+y_2+y_3=9,
\end{align*}

we find the nucleolus of this game is $y = (11/3, 5/3, 11/3)$. Here, $C\neq \emptyset$ and the nucleolus of this game is the center of the core.





	

\subsection{An algorithm to find the nucleolus}

As discussed before, finding a nucleolus will ensure the ridesharing system implements the provably fairest cost allocation plan to the users, which also ensures the stability of the system. A procedure to calculate the nucleolus is developed in this paper and with sufficient amount of computational resources, a system that is able to handle realistically large-scale ridesharing service system can be developed. In a nutshell, the proposed algorithm starts with the least core, and continues with lexicographic optimization through iterating between the master problem and the subproblems. We detail this procedure below. 



\subsubsection{The master problem}

Since the nucleolus is in the least core, which minimizes the maximal dissatisfaction, we start with the solution to the following maximin problem
\begin{equation*}
	\max_{y\in Y} \min_{\forall S\subset N} (c(S) - \sum_{i\in S}y_i),
\end{equation*}

which can be transformed to a linear program
\begin{align}
	\label{p1}
	\text{($P^1$)} \quad \max \quad &w  \\
	\text{s.t.} \quad &w\le c(S) - \sum_{i\in S} y_i, \quad S\in \mathbb{S} \\
	&\sum_{i\in N} y_i = c(N).
\end{align}

Notice that the LP program has $O(\mid\mathbb{S}\mid)$ constraints, and computing $c(S), S\in \mathbb{S}$ involves solving the corresponding TSPPD. So the LP program can easily become intractable. We therefore approach this problem with a constraint generation procedure. \cite{Hallefjord} has suggested such an approach for linear programming games. \cite{Basic_VRG} has used a similar approach to solve the vehicle routing problem (VRP) game.

Since before searching for the nucleolus, we should already know the solution to the corresponding RSP, thus the optimal route configuration, we can start $(P_1)$ with the coalitions corresponding to the optimal routes. Besides, the singleton coalitions' cost values are readily available. Denote by $\Omega \in \Ss$ the available coalitions, then $(P_1)$ can be replaced by the following relaxed problem
\begin{align}
	\label{pM1}
	\text{($P_M^1$)} \quad \max \quad &w  \\
	\text{s.t.} \quad &w\le c(S) - \sum_{i\in S} y_i, \quad S\in \Omega \\
	&\sum_{i\in N} y_i = c(N).
\end{align}

If the solution to ($P_M^1$) is unique, let it be $(y^*, w^*)$, i.e. $\theta_1(y^*)>\theta_1(y'), \forall y'\in Y\backslash \{y^*\}$, then $y^*$ is the nucleolus of the game. If the solution to ($P_M^1$) is not unique, we continue to find the greatest $\theta_2(y)$ among the $y\in Y$ with $\theta_1(y) = w^*$. We continue this process until the solution to the linear program is unique. At stage $t$ the master LP problem to be solved is
\begin{align}
	\label{pt}
	\text{($P_M^t$)} \quad \max \quad &w_t  \\
	\text{s.t.} \quad &w_t\le c(S) - \sum_{i\in S} y_i, \quad S\in \mathbb{S}\backslash \bigcup_{\tau=1}^{t-1} \Gamma_{\tau}, \\
	&w_{\tau} = c(S) - \sum_{i\in S}y_i, \quad S\in \Gamma_{\tau}, \tau = 1,\ldots,t-1,  \\
	&\sum_{i\in N} y_i = c(N).
\end{align}

The solution to the last program in this series is the nucleolus of this game. Let $\Pi_{t,S}$ be the dual variable corresponding to constraint $w\le c(S) - \sum_{i\in S} y_i$. Let $\Gamma_t$ denote the set of coalitions whose corresponding constraints are \emph{binding}, that is, $\Gamma_t = \{S\in \Ss \bigcup_{\tau=1}^{t-1} \Gamma_{\tau} \mid \Pi_{t,S}^*>0\}$. 

The essential idea of constraint generation approach is trying to find the nucleolus with explicit information of only a small portion of the entire coalition set. This goal is realized by finding the most violated constraint that is not yet included in $\Omega$ via a subproblem after the master problem is solved at each stage. Denote the optimal solution to $(P_M^1)$ by $y^* = (y_1^*, \ldots, y_n^*)$. The constraint that is violated the most, aka the most unhappy coalition given the cost allocation scheme $y^*$, is obtained through solving the following subproblem 
\begin{equation*}
	\text{($P_S$)} \min_{S\in \mathbb{S}\backslash \Omega} c(S) - \sum_{i\in S}y_i^* - w^*
\end{equation*}


This nucleolus-finding procedure for a ridesharing game is developed based on the theories and techniques proposed in \cite{Dragan} and \cite{kopelowitz1967computation} and a general constraint generation framework proposed in \cite{Hallefjord}. The pseudocode of this procedure is given in Algorithm~\ref{algo: nucleolus}. First, at stage $t$ the master LP problem $P_M^t$ is solved and both the primal and dual solutions are returned. Second, a subproblem $P_S$ is solved and the least satisfied constraint ($s^*$) that is not yet included is identified. If $c^* \le 0$, then we include $s^*$ in $\Omega_{INEQ}$ and resolve $P_M^t$ with newly included constraint $w_t\le c(s^*) - \sum_{i\in s^*}y_i^*$. This stage iterates between the master problem and the subproblem until no coalition violates the rationality constraints of the master problem (i.e. $c^* \ge 0$). When this is achieved, we identify the active and binding constraints, reformulate the master problem (modify $\Omega_{INEQ}$ and $\Omega_{EQ}$) and proceed to the next stage ($t = t+1$). This process continues until the solution $y^*$ to the master problem is unique. And this last solution is the nucleolus of the RSP game. Note that in the procedure SP.addCut($s^*$), a cut of the type of inequality (\ref{no-repeat}) is added to the subproblem to prevent the duplication of row associated with coalition $s^*$.

The following two subsections will discuss two formulations of subproblem $P_S$. In particular, we will discuss how the second formulation reduces the complexity of the overall procedure by utilizing the aforementioned propositions. 


\begin{algorithm}[ht!]
    \SetKwInOut{Input}{input}\SetKwInOut{Output}{output}
    \Input{Geolocations of customers}
    \Output{Nucleolus of the ridesharing game}
	\BlankLine
	t $:=$ 1 \;
	$\Omega_{INEQ}$ \;
	STOP $:=$ false \;
	\While{!STOP}{
		Solve a master problem $P_M^t$    \;
		Solve a subproblem $P_S$  \;
		$c^* = \min_S c(S) - \sum_{i\in S}y_i^* - w^*$ \; 
		$s^* = \arg\min c(S) - \sum_{i\in S}y_i^* - w^*$ \;
		\If{$c^* \le 0$} {
			SP.addCut($s^*$) \;
			$\Omega_{INEQ}:=\Omega_{INEQ}\cup \{s^*\}$  \;
		}
		\Else {
			STOP $:=$ true \;
			\For{every active and binding constraint $s$}{
				STOP $:=$ false \;
				$\Omega_{INEQ}:=\Omega_{INEQ}\backslash \{s\}$  \;
				$\Omega_{EQ}:=\Omega_{EQ}\cup \{s\}$  \;
			}
			$t := t+1$  \;
		}
    }
    \caption{Procedure of finding the nucleolus of a cooperative ridesharing game}
    \label{algo: nucleolus}
\end{algorithm}


\subsection{Coalition generation subproblem -- general}

Recall in the nucleolus-finding scheme described in Algorithm~\ref{algo: nucleolus}, it involves finding the most violated constraint in the subproblem. This is equivalent to finding the ``least satisfied" subset of customers with a given allocation proposal. A general formulation of the subproblem is thus
\begin{align}
	\text{($P_S^0$)} &\min_{S\in \Ss\backslash \Omega} c(S) - \sum_{i\in S}y_i^* - w^* \\
	\label{no-repeat-general}
	&\sum_{\{i \mid s_i^j=0\}} s_i + \sum_{\{i \mid s_i^j=1\}} (1 - s_i) \ge 1, \quad j \mid S_j \in \Omega \\
	&s_i\in \{0, 1\}, i\in N
\end{align}

Constraints~(\ref{no-repeat-general}) are preventing the re-generation of constraints.

Note that calculating $c(S)$ is equivalent to solving the RSP model for customers $i\in S$, i.e. those $s_i = 1$. This implies that we can formulate the subproblem $P_S^0$ explicitly. Denote by $G=(V, E)$ the graph of the RSP game with vertex set $V=V_O \cup V_D \cup \{0\}$ and edge set $E=\{(i,j) \mid i,j \in V, i\neq j\}$. Here vertex $0$ is the ``dummy" depot such that any edge incident with it has a cost of $0$. $V_O(V_D)$ is the origin (destination) vertex set of players in $N$. Each player is associated with a profit (prize) equal to $y_i^*$. The subproblem of the constraint generation procedure is to find a subset of customers in $N$ which maximizes the total prize minus the total cost, while conforming to certain constraints.


\begin{align}
	\label{obj}
	\text{($P_S^1$)} \quad \pi = \quad &\text{max} \quad \sum_{k\in N} y_{k}^*s_{k} - \sum_{i\in V}\sum_{j\in V} c_{ij}\lambda_{ij} + w^*  \\
	\label{no-repeat-3}
	&\sum_{\{i \mid s_i^j=0\}} s_i + \sum_{\{i \mid s_i^j=1\}} (1 - s_i) \ge 1, \quad j \mid S_j \in \Omega \\
	\label{driverPrecedence}
	& \lambda_{0i} - \lambda_{(i+n) 0} = 0, \quad i=1,2, \dots, n \\
	\label{balance-1}
	& \sum_{i=0}^{2n}\lambda_{ik} = s_k, \quad k\in N \\
	\label{balance-2}
	& \sum_{i=0}^{2n}\lambda_{ki} = s_k, \quad k\in N \\
	\label{balance-3}
	& \sum_{i=0}^{2n}\lambda_{i, k+n} = s_k, \quad k\in N \\
	\label{balance-4-1}
	& \sum_{i=0}^{2n}\lambda_{k+n, i} = s_k, \quad k\in N \\
	\label{SEC}
	& u_i - u_j + p\lambda_{ij} \le p-1, \quad 1\le i \neq j \le 2n  \\
	\label{precedence}
	& u_i < u_{i+n}, \quad i=1,2, \ldots, n \\
	\label{oneDriver-3}
	& \sum_k y_{ik} = 1, \quad i=1, \ldots, 2n; \quad k=1, \ldots, n \\
	\label{ODSameVehicle-3}
	& y_{ik} = y_{(i+n)k}, \quad i=1, \ldots, n; \quad k=1,\ldots, n \\
	\label{driver-3}
	& \lambda_{0i} = y_{ii}, \quad i=1, \ldots, n \\
	\label{bridge1-3}
	& \lambda_{ij} \le M_1(1-z_{ij}), \quad 1\le i \neq j \le 2n \\
	\label{bridge2-3}
	& -M_2z_{ij} \le y_{ik} - y_{jk} \le M_2z_{ij}, \quad 1\le i \neq j \le 2n; \quad k=1, \ldots, n \\
	& \lambda_{ij} \in \{0, 1\}, \quad 0\le i\neq j\le 2n \\
	& y_{ik} \in \{0, 1\}, \quad i=1, \ldots, 2n; \quad k=1,\ldots, n \\
	& z_{ij}\in \{0, 1\},  \quad 1\le i\neq j\le 2n  \\
	& s_k\in \{0, 1\}, \quad k\in N 
\end{align}

This problem can be termed as prize-collecting RSP (see~\cite{Balas1989} for an analogy of TSP and prize-collecting TSP). Constraints~(\ref{driverPrecedence}) make sure that a tour starts at origin and ends at destination. Constraints~(\ref{balance-1}) -~(\ref{balance-4-1}) are the continuity constraints. Constraints (\ref{SEC}) are a group of subtour-elimination constraints (SECs) first proposed by~\cite{Miller:1960:IPF:321043.321046}. Here $u_i$ are continuous variables called \emph{node potentials} that indicate the visit order of node $i$ in the tour, while $p$ denotes the maximal number of nodes a vehicle can visit in a tour. This parameter can be used to specify the seat capability of vehicles. Generally $p$ won't exceed 10. Constraints (\ref{precedence}) ensure that a customer's origin precedes his destination. Constraints~(\ref{oneDriver-3}) ensure that each node is visited by exactly one vehciel. Constraints~(\ref{ODSameVehicle-3}) make sure that a customer's origin and destination are visited by the same vehicle. The intuitive meaning of constraints (\ref{bridge1-3}) and (\ref{bridge2-3}) is that if edge $(i, j)$ is selected in the solution then nodes $i, j$ must be served by the same vehicle. This set of constraints serve as a bridge between $\lambda$ variables (indicating whether an edge is selected) and $y$ variables (indicating whether a node is served by a particular vehicle). Here $M_1$ and $M_2$ are large numbers. 

\subsection{Coalition generation subproblem -- non-empty core}
Recall that when the RSP game has a non-empty core, the only coalitions that are non-redundant are the feasible coalitions (Proposition~\ref{prop: feasible}). Notice that $c(S), S\in \Ss$ is the minimum cost of a feasible route that covers the origin and destination of all the players in $S$, that is, those $s_i=1, i\in N$. This inspires us to formulate the subproblem explicitly. Let $G=(V, E)$ be the graph with vertex set $V=V_O \cup V_D \cup \{0\}$ and edge set $E=\{(i,j) \mid i,j \in V, i\neq j\}$. Here vertex $0$ is the ``dummy" depot such that any edge incident with it has a cost of $0$. $V_O(V_D)$ is the origin (destination) vertex set of players in $N$. Each player is associated with a profit (prize) equal to $y_i^*$. The constraint generation subproblem finds a feasible route in $G$ which maximizes the total prize minus cost, while conforming to the ridesharing problem constraints, such as capacity, precedence and maximum travel distance per passenger.

Denote the edge selection variable in the graph by $\lambda$. This problem is represented by
\begin{align}
	\label{obj}
	\text{($P_S^2$)} \quad \pi = \quad \text{max} \quad &\sum_{k\in N} y_{k}^*s_{k} - \sum_{i\in V}\sum_{j\in V} c_{ij}\lambda_{ij} + w^*  \\
	\label{capacity}
	\text{s.t.} \quad &\sum_{k\in N} s_k \le Q  \\
	\label{no-repeat}
	&\sum_{\{i \mid s_i^j=0\}} s_i + \sum_{\{i \mid s_i^j=1\}} (1 - s_i) \ge 1, \quad j \mid S_j \in \Omega \\
	\label{startend}
	& \lambda_{0i} - \lambda_{(i+n)0} = 0, \quad i=1, 2, \ldots, n  \\
	\label{oneDriver-1}
	& \sum_{k\in N}\lambda_{0k} = 1 \\
	\label{oneDriver-2}
	& \sum_{k\in N}\lambda_{k+n,0} = 1 \\
	\label{balance-1-2}
	& \sum_{i=0}^{2n}\lambda_{ik} = s_k, \quad k\in N \\
	\label{balance-2}
	& \sum_{i=0}^{2n}\lambda_{ki} = s_k, \quad k\in N \\
	\label{balance-3}
	& \sum_{i=0}^{2n}\lambda_{i, k+n} = s_k, \quad k\in N \\
	\label{balance-4}
	& \sum_{i=0}^{2n}\lambda_{k+n, i} = s_k, \quad k\in N \\
	\label{SEC-2}
	& u_i - u_j + p\lambda_{ij} \le p-1, \quad 1\le i \neq j \le 2n  \\
	\label{precedence-2}
	& u_i < u_{i+n}, \quad i=1,2, \ldots, n \\
	& \lambda_{ij} \in \{0, 1\}, \quad i, j\in V, i\neq j \\
	& s_k\in \{0, 1\}, \quad k\in N  
\end{align}

Constraints~\eqref{no-repeat} put the restriction that a constraint that is generated before is not generated again. Constraints~\eqref{capacity} are the capacity constraints and \eqref{startend} forces a tour to start at origin and end at destination. Constraints~\eqref{balance-1-2},~\eqref{balance-2},~\eqref{balance-3} and~\eqref{balance-4} are the flow balancing constraints for each vertex. Constraints~\eqref{SEC-2} are the subtour elimination constraints and~\eqref{precedence-2} are the precedence constraints. 

It is noteworthy that the propositions in Section~\ref{subsec: prop}  help reduce the complexity of Algorithm 1 significantly. This contribution is twofold. First, when the RSP game has a non-empty core, only feasible coalitions need to be considered in the subproblem ($P^2_S$). Second, although both prize-collecting RSP ($P^1_S$) and prize-collecting TSPPD ($P^2_S$) are NP-hard, prize-collecting RSP involves rider partitioning, therefore is much more difficult to solve than prize-collecting TSPPD, which concerns a single route with limited number of stops. 

\section{Experiments}
\label{sec: exp}

We have implemented the nucleolus algorithm (with both $P_S^1$ and $P_S^2$ as subproblem) in Java with CPLEX 12.6 and the Concert library. In this section, we first show the results of nucleolus algorithm with $P_S^2$ as subproblem. Because $P_S^2$ is much easier to solve than $P_S^1$, and the coalitions needed are much fewer in $P_S^2$ than in $P_S^1$, nucleolus algorithm with $P_S^2$ can not only calculate the nucleolus when the RSP has a non-empty core, but can also be used to find an approximate nucleolus when the corresponding RSP has an empty core. Next, we show a comparison between the nucleolus and the approximate nucleolus, obtained by using the nucleolus algorithm with $P_S^1$ and $P_S^2$ as the subproblem, respectively.

\subsection{Approximate nucleolus}
We report results for two instances of the 10-player problem, which is the largest problem we have solved. As will be shown later, the computational bottleneck is not at the nucleolus algorithm but at solving the corresponding ridesharing optimization problem (RSP). The data set\footnote{The data sets can be downloaded from \url{http://www.diku.dk/~sropke/}} we used in the experiments were selected from \cite{RopkeTSPPD}. The origins and destinations of customers were randomly generated in the square $[0, 1000]\times [0, 1000]$. The Euclidean distances were used. Table~\ref{table: prob10c} shows the geographical locations of the players. After solving the RSP MIP model, the optimal ridesharing plan is $\{1\}$, $\{3,5\}$, $\{4,6,7\}$, $\{8\}$, $\{2, 9\}$, $\{10\}$. These, along with other singleton coalitions are used to generate the initial constraints of the master LP problem. 

At stage $1$ three constraints are generated by solving the subproblem. They correspond to the coalitions of $\{3, 4, 6\}$, $\{4, 6\}$, and $\{2, 3, 4, 5, 9\}$. Active constraints corresponding to coalitions $\{8\}$, $\{10\}$ and $\{1\}$ are then identified at the end of stage $1$. At stage $2$, we notice that the solution to the master problem $(P_M^2)$ is unique, thus we find the approximate nucleolus. The approximate nucleolus of this game is listed in the last column of Table~\ref{table: prob10c}. 

In total, $10+3+3=16$ out of $2^{10}-2 = 1022$ constraints are needed to compute the approximate nucleolus, which is only a very small fraction ($1.6\%$).   

\begin{table}[h]
\centering
\caption{Data and approximate nucleolus of prob10c}
\label{table: prob10c}
\begin{tabular}{
    S
    >{{(}}
    S[table-format=3, table-space-text-pre=(]@{,}
    S[table-format=3, table-space-text-post=)]
    <{{)}} 
    >{{(}}
    S[table-format=3, table-space-text-pre=(]@{,}
    S[table-format=3, table-space-text-post=)] 
    <{{)}}
    S[table-format=3.1]
}
\hline
\multicolumn{1}{c}{\begin{tabular}[c]{@{}l@{}}Customer\\ Number\end{tabular}} & \multicolumn{2}{c}{\begin{tabular}[c]{@{}l@{}}Pickup\\ Coordinates\end{tabular}} & \multicolumn{2}{c}{\begin{tabular}[c]{@{}l@{}}Drop-off\\ Coordinates\end{tabular}} & \multicolumn{1}{c}{\begin{tabular}[c]{@{}l@{}}Nucleolus\\ Cost\end{tabular}} \\ \hline
1                                                         & 387 & 137                                                   & 918 & 786                                                     & 346.2                                                    \\
2                                                         & 595 &     4                                                     & 852 & 236                                                     & 267.1                                                    \\
3                                                         & 514 & 483                                                   & 9 & 481                                                       & 627.5                                                    \\
4                                                         & 342 & 655                                                   & 609 & 55                                                      & 729.7                                                    \\
5                                                         & 715 & 887                                                   & 372 & 215                                                     & 434.3                                                    \\
6                                                         & 111  & 687                                                   & 777 & 91                                                      & 250.4                                                    \\
7                                                         & 692 & 933                                                   & 203 & 173                                                     & 97.5                                                     \\
8                                                         & 791  & 847                                                   & 488 & 312                                                     & 226.1                                                    \\
9                                                         & 702  & 762                                                   & 928 & 755                                                     & 520.8                                                    \\
10                                                       & 543  & 443                                                   & 90 & 700                                                      & 92.4                                                          \\ \hline
\end{tabular}
\end{table}

In our second experiment of prob10d (see Table~\ref{table: prob10d}), the optimal ridesharing configuration is $\{1\}$, $\{2, 3, 4, 6\}$, $\{7\}$, $\{5, 8\}$, $\{9\}$ and $\{10\}$. At stage 1, four constraints are generated after via solving the subproblem. They correspond to the coalitions of $\{2, 3, 4, 9\}$, $\{3, 6\}$, $\{2, 9\}$ and $\{2, 3, 4, 5, 6\}$. At stage $2$, the master LP problem is found to have a unique solution. This solution is thus the approximate nucleolus of this game. The approximate nucleolus is listed in the last column of Table~\ref{table: prob10d}. In total, $(10+2+4)/1022\approx 1.6\%$ constraints were needed to compute the approximate nucleolus. 

\begin{table}[h]
\centering
\caption{Data and approximate nucleolus of prob10d}
\label{table: prob10d}
\begin{tabular}{
    S
    >{{(}}
    S[table-format=3, table-space-text-pre=(]@{,}
    S[table-format=3, table-space-text-post=)]
    <{{)}} 
    >{{(}}
    S[table-format=3, table-space-text-pre=(]@{,}
    S[table-format=3, table-space-text-post=)] 
    <{{)}}
    S[table-format=3.1]
}
\hline
\multicolumn{1}{c}{\begin{tabular}[c]{@{}l@{}}Customer\\ Number\end{tabular}} & \multicolumn{2}{c}{\begin{tabular}[c]{@{}l@{}}Pickup\\ Coordinates\end{tabular}} & \multicolumn{2}{c}{\begin{tabular}[c]{@{}l@{}}Drop-off\\ Coordinates\end{tabular}} & \multicolumn{1}{c}{\begin{tabular}[c]{@{}l@{}}Nucleolus\\ Cost\end{tabular}} \\ \hline
1                                                         & 60 & 742                                                    & 34 & 697                                                      & 52.0                                                     \\
2                                                         & 730 & 471                                                   & 390 & 845                                                     & 444.1                                                    \\
3                                                         & 964 & 151                                                   & 39 & 78                                                       & 808.7                                                    \\
4                                                         & 336 & 763                                                   & 11 & 332                                                      & 481.2                                                    \\
5                                                         & 330 & 593                                                   & 570 & 862                                                     & 326.9                                                    \\
6                                                         & 496 & 333                                                   & 88 & 346                                                      & 374.1                                                    \\
7                                                         & 168 & 403                                                   & 432 & 341                                                     & 271.2                                                    \\
8                                                         & 343 & 502                                                   & 525 & 846                                                     & 173.3                                                    \\
9                                                         & 600 & 534                                                   & 585 & 615                                                     & 83.0                                                     \\
10                                                        & 18 & 952                                                    & 494 & 605                                                     & 589.0                                                    \\ \hline
\end{tabular}
\end{table}

It is noteworthy to point out that the computational time for both instances is very small (less than 10s), indicating the bottleneck is the optimization solution method. 

\subsection{Nucleolus vs. approximate nucleolus}
In this subsection we conduct two experiments to compare the actual nucleolus and the approximate nucleolus. Finding the actual nucleolus by using Algorithm~\ref{algo: nucleolus} with $P_S^1$ as subproblem is a very time-consuming process. In our first example, problem8a, it takes $5$ hours to find the actual nucleolus. In our second example, problem8b, the time it takes to find the actual nucleolus goes up to $20$ hours.

The actual nucleolus cost and approximate nucleolus cost are summarized in Table~\ref{table: nuc-vs-appNuc-8a} and Table~\ref{table: nuc-vs-appNuc-8b}. As we can see, the solutions obtained by the approximate nucleolus algorithm are a close approximation for the actual nucleolus in both cases. 

It is of our interest to see the computational performance of Algorithm~\ref{algo: nucleolus} using $P_S^1$ and $P_S^2$. In problem8a, a total of $193$ constraints are generated by $P_S^1$, comparing to a total of $9$ constraints generated by $P_S^2$. In problem10d, a total of $184$ coalitions are generated by $P_S^1$, comparing to a total of only $9$ coalitions generated by $P_S^2$. Note that the total number of coalitions (not including the empty set and the universal set) is $2^8-2=254$ for problem8a and problem8b. Therefore, Algorithm~\ref{algo: nucleolus} using $P_S^1$ generated $193/254=76\%$ and $184/254=72\%$ of the total constraints to find the nucleolus in problem8a and problem8b, respectively. So Algorithm~\ref{algo: nucleolus} using $P_S^1$ is more like an enumeration procedure. Thus $P_S^2$ is much more computationally efficient than $P_S^1$, since the number of constraints generated by $P_S^1$ are significantly higher than that of $P_S^2$. This along with the fact that $P_S^1$ is much harder to solve than $P_S^2$ explains the significant time difference between Algorithm~\ref{algo: nucleolus} using $P_S^1$ and $P_S^2$. 

In Figure~\ref{solution-path-8a} and \ref{solution-path-8b} we measure the Euclidean distance between the incumbent nucleolus and the actual nucleolus ($\sqrt{\sum_{i\in N}(y_i^*-y_i)^2}$) as the algorithm iterates. These two figures show the solution path of the algorithm in both cases. As can be seen, in both cases, the algorithm found the nucleolus before it stopped. This happened before the 20th iteration in problem8a, and before $75$ constraints were generated in problem8b. It means the majority of the running time of this algorithm is consumed after the actual nucleolus is found. 

\begin{table}[]
\centering
\caption{Nucleolus vs. approximate nucleolus -- problem8a}
\label{table: nuc-vs-appNuc-8a}
\begin{tabular}{
	SSS
}
\hline
\multicolumn{1}{c}{\begin{tabular}[c]{@{}l@{}}Customer \\ Number\end{tabular}} & \multicolumn{1}{c}{\begin{tabular}[c]{@{}l@{}}Nucleolus\\ Cost\end{tabular}} & \multicolumn{1}{c}{\begin{tabular}[c]{@{}l@{}}Approximate \\ Nucleolus Cost\end{tabular}} \\ \hline
1 & 442.1 & 449.9 \\
2 & 505.0 & 512.8 \\
3 & 639.2 & 636.6 \\
4 & 409.3 & 406.7 \\
5 & 527.6 & 525.0 \\
6 & 465.8 & 463.2 \\
7 & 228.4 & 225.8 \\
8 & 519.4 & 516.8 \\ \hline
\end{tabular}
\end{table}

\begin{table}[]
\centering
\caption{Nucleolus vs. approximate nucleolus -- problem8b}
\label{table: nuc-vs-appNuc-8b}
\begin{tabular}{
	S	
	S
	S
}
\hline
\multicolumn{1}{c}{\begin{tabular}[c]{@{}l@{}}Customer \\ Number\end{tabular}} & \multicolumn{1}{c}{\begin{tabular}[c]{@{}l@{}}Nucleolus\\ Cost\end{tabular}} & \multicolumn{1}{c}{\begin{tabular}[c]{@{}l@{}}Approximate \\ Nucleolus Cost\end{tabular}} \\ \hline
1 & 366.5 & 366.5 \\
2 & 507.1 & 457.7 \\
3 & 594.1 & 617.1 \\
4 & 387.6 & 456.7 \\
5 & 1012.5 & 1035.6 \\
6 & 258.1 & 235.1 \\
7 & 545.3 & 571.8 \\
8 & 235.9 & 166.8 \\ \hline
\end{tabular}
\end{table}

\begin{figure}[ht]
\centering
\includegraphics[width=4in]{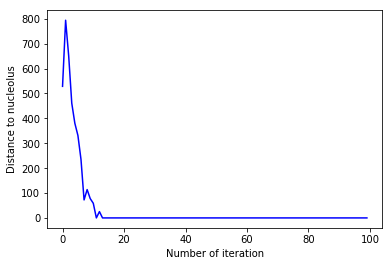}
\caption{Solution path -- problem8a}
\label{solution-path-8a}
\end{figure}

\begin{figure}[ht]
\centering
\includegraphics[width=4in]{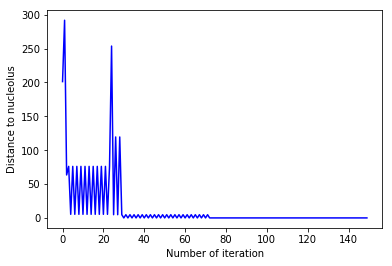}
\caption{Solution path -- problem8b}
\label{solution-path-8b}
\end{figure}





\section{Conclusions}
\label{sec: conclusions}
We studied an important problem faced by ridesharing service provider: how to allocate cost among ridesharing participants to ensure sustainability and fairness. This fair cost allocation problem was modeled as a cooperative game. A special property of the cooperative ridesharing game is that its characteristic function values are calculated by solving an optimization problem. To better understand this game, we further studied the characteristic function and proved it to be monotone, subadditive, but non-convex (meaning the core can be empty). The most fair allocation plan is identified by the nucleolus of the RSP game. We then proposed an iterative constraint-generation algorithm (Algorithm~\ref{algo: nucleolus}) for calculating it in two situations -- the game has an empty core and the game has a non-empty core. In both cases the algorithm utilizes an explicitly formulated MIP as the subproblem to generate constraints. When the game has an empty core, the algorithm uses $P_S^1$ as the subproblem and becomes an enumeration procedure to find the nucleolus of the game. When the game has a non-empty core, this algorithm uses $P_S^2$ as the subproblem which utilizes the special properties of the RSP game such that the characteristic function values are computed only when they are needed. Therefore the number of subproblems (an NP-hard optimization problem) that need to be solved is significantly reduced. Experiments showed that by adopting this algorithm with $P_S^2$ only a small fraction ($1.6\%$) of the coalition constraints were needed to find the nucleolus. It is also found in the experiments where the emptiness of the RSP game is unclear, the algorithm with $P_S^2$ can be used to find an approximate nucleolus that is close to the actual one. This indicates that our proposed algorithm is promising in finding nucleolus of dynamic, large-scale RSP game and that since a cooperative game theory modeling approach does not necessarily differentiate drivers and riders explicitly, our model, the mathematical programs and the algorithm proposed in this paper also have very promising application in an autonomous vehicle ridesharing systems. We can see a few interesting and promising future research directions related to this study. First, efficient heuristics for the subproblems can be developed. Second, with such efficient algorithms in hand, we can further investigate the interaction between the vehicle capacity and the cost allocation, which we believe will provide insight on the intrinsic nature of ridesharing game. Besides, we started our study with the motivation of designing mathematical models and algorithms for the most general case of ridesharing scenario with the least amount of assumptions on the coalition etc. However in the real world, the existence of certain situations can significantly simplify the calculation. For instance, in the case some of the ridesharing participants have formed a coalition on their own, we can exploit these structural properties to simplify the coalition generation scheme and expedite the nucleolus calculation.

\section*{ACKNOWLEDGEMENTS}
The authors would gratefully acknowledge the kind support from the Dissertation Fellowship of Texas A\&M University. We are grateful to anonymous reviewers whose valuable suggestions have led to a considerable improvement in the organization and presentation of this manuscript. 

\bibliographystyle{plainnat}
\bibliography{bibReference}


\section*{Appendix}
We show the RSP game has the following properties. From here on we denote by $C(\cdot)$ the mathematical program that defines the characteristic function value of $\cdot$, i.e. $c(\cdot)$.

\begin{proposition}[Monotonicity]
	The characteristic function of the RSP game is \emph{monotone}, that is, $c(S)\le c(T), S\subset T\subset N$.
\end{proposition}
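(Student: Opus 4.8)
The plan is to prove $c(S)\le c(T)$ by induction on $|T\setminus S|$, which reduces the whole statement to the single-player step: for any $S\subseteq N$ and $p\notin S$, $c(S)\le c(S\cup\{p\})$. Granting that step, if $T\setminus S=\{p_1,\dots,p_l\}$ then $c(S)\le c(S\cup\{p_1\})\le\cdots\le c(T)$ by telescoping. For the single-player step I would take an optimal solution of the mathematical program that defines $c(S\cup\{p\})$; its selected routes serve pairwise-disjoint coalitions $T_0,\dots,T_k$ with union $S\cup\{p\}$, say $p\in T_0$, and $c(S\cup\{p\})=c_{r_0}+\sum_{j\ge1}c_{r_j}$. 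Each route $r_j$ with $j\ge1$ already serves exactly $T_j\subseteq S$ and can be reused verbatim, so the entire task is to cover $S_0:=T_0\setminus\{p\}$ for a cost not exceeding $c_{r_0}$.

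If $S_0=\emptyset$, then $r_0$ serves the singleton $\{p\}$, so $c_{r_0}=c(\{p\})>0$ by the standing hypothesis, and discarding $r_0$ leaves a feasible solution for $c(S)$ of cost $c(S\cup\{p\})-c_{r_0}<c(S\cup\{p\})$. If $S_0\ne\emptyset$, I would build a feasible route $r'$ for $S_0$ with $c_{r'}\le c_{r_0}$ by ``short-cutting'' $r_0$ past the two stops $s_p$ and $t_p$: removing those two nodes from the node sequence of $r_0$ leaves a walk through exactly the nodes of $S_0$ that still respects every precedence constraint, and using the triangle inequality on the edge costs $c_{ij}$ I would argue this walk can be turned into a feasible route $r'$ for $S_0$ with $c_{r'}\le c_{r_0}$; patching $r'$ for $r_0$ then gives $c(S)\le c_{r'}+\sum_{j\ge1}c_{r_j}\le c_{r_0}+\sum_{j\ge1}c_{r_j}=c(S\cup\{p\})$, completing the induction.

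The step I expect to be the real obstacle is precisely this short-cutting, and within it the case split according to whether $p$ rides as a passenger on $r_0$ or is its driver. If $p$ is a passenger, $s_p$ and $t_p$ are interior nodes of $r_0$, and deleting an interior node $u$ lying between neighbours $a,b$ replaces the two edges $(a,u),(u,b)$ by the single edge $(a,b)$ at no greater cost, while the route still starts and ends at the original driver's terminals, so it is feasible for $S_0$; this case is routine. If $p$ is the driver, however, $s_p$ is the first stop and $t_p$ the last, so after deleting them the remaining walk has endpoints that are the origin of one agent and the destination of a different one, which violates the feasibility requirement that a route begin at some agent's origin and end at that same agent's destination. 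One then has to re-designate a driver inside $S_0$ and re-route, and the obvious re-routings introduce a new ``closing'' edge whose length the triangle inequality alone does not control. I would attack this by splitting the remaining walk at a ``self-contained'' breakpoint --- a prefix of the node sequence all of whose agents have both their origin and destination inside the prefix --- into several feasible routes, or by a more global rearrangement; securing the cost bound in this case is where the positivity-of-singletons hypothesis (and, implicitly, a metric assumption on the costs $c_{ij}$) is most likely needed, and it is the part I expect to demand the most care. (If instead $c(S)$ were read as a set-covering optimum --- covering \emph{at least} the members of $S$ --- the statement would be immediate, since any configuration feasible for $T$ is feasible for $S$.)
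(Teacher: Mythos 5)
Your strategy is genuinely different from the paper's. The paper's proof never modifies a route internally: it takes an optimal solution $X$ to the program $C(T)$ and simply \emph{discards every whole route} that serves no player of $S$, asserting that what remains is feasible for $C(S)$ and cheaper. You instead peel off one player $p$ at a time and try to excise $p$'s two stops from the single route $r_0$ that serves him, using the triangle inequality to control the cost. Your passenger case is indeed routine, and your induction/telescoping frame is fine. The gap is exactly where you place it: when $p$ is the driver of $r_0$, the truncated walk runs from one agent's origin to a \emph{different} agent's destination, and re-closing it into a feasible route (or splitting it into several) requires a new edge whose length the triangle inequality does not bound. You flag this as ``the part demanding the most care'' but do not close it, so the proof is incomplete.

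Unfortunately the obstacle is not merely technical --- it is fatal to this line of attack, and in fact to the proposition under the exact-cover reading of the program defining $c(S)$ (constraints $\sum_r a_{ir}x_r = s_i$). Take $s_a=(0,1)$, $t_a=(10,-1)$, $s_b=(0,-1)$, $t_b=(10,1)$, $s_p=(0,1.05)$, $t_p=(10,1.05)$ with Euclidean costs and $Q\ge 3$. The route $s_p,s_a,s_b,t_a,t_b,t_p$ driven by $p$ is feasible and costs $14.1$, so $c(\{a,b,p\})\le 14.1$; but every feasible configuration for $\{a,b\}$ alone (driver $a$: $s_a,s_b,t_b,t_a$; driver $b$: $s_b,s_a,t_a,t_b$; or two solo trips) costs at least $4+\sqrt{104}\approx 14.198$, so $c(\{a,b\})>c(\{a,b,p\})$. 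The interleaved visiting order that makes the three-person route cheap is realizable only because the route's endpoints belong to the third agent, which is precisely your driver case. Be aware that the paper's own proof does not evade this either: its constructed solution $X'$ keeps any route of $X$ that serves both a player of $S$ and a player of $T\setminus S$, and then $\sum_r a_{ir}x_r=1\neq s_i=0$ for the latter player, so $X'$ is feasible for $C(S)$ only when no optimal route for $T$ mixes the two groups. Your closing parenthetical is the correct diagnosis: under covering semantics ($\sum_r a_{ir}x_r\ge s_i$) monotonicity is immediate because any solution for $T$ is already feasible for $S$, and that is the reading under which the paper's ``discard the unneeded routes'' argument actually goes through.
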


\begin{proof}
	Proof by contradiction. Suppose there exists $S\subset T\subset N$ and $c(S)>c(T)$. Let $X=\{x_{r} \mid r\in R\}$ be an optimal solution to $C(T)$. Let $R_T=\{r_i \mid x_{r_i}=1\}$. 
    
	For $r\in R$, we construct a feasible solution to $C(S)$ in the following manner. Let 
	\begin{equation*}
	x_r = \begin{dcases}
		1, \quad \text{if $\exists i\in S$ such that $a_{ir}=1$}, \\
		0, \quad \text{otherwise}. 
	      \end{dcases}
	\end{equation*}	

	Let $X'$ be the solution constructed in the above way. Denote by $R_S$ the set of selected routes. Intuitively, we keep those routes in $R_T$ that covers at least one player in $S$ and discard those don't. 

	It is known that $X$ must satisfy 
	\begin{align*}
		&\sum_{r\in R} a_{ir}x_{r}= 1,  \quad i\in T \\
		&\sum_{r\in R} a_{ir}x_{r}= 0,  \quad i\in N-T
	\end{align*}
	Because $S\subset T$, then $X'$ must satisfy
	\begin{align*}
		&\sum_{r\in R} a_{ir}x_{r}= 1,  \quad i\in S \\
		&\sum_{r\in R} a_{ir}x_{r}= 0,  \quad i\in T-S \\
		&\sum_{r\in R} a_{ir}x_{r}= 0,  \quad i\in N-T
	\end{align*}
	This is equivalent to
	\begin{equation*}
		\sum_{r\in R} a_{ir}x_{r}= s_i,  \quad i\in N
	\end{equation*}
	So $X'$ is a feasible solution to $C(S)$. In addition, since the cost matrix $\{c_{ij}\}$ is positive, the route cost is also positive. Therefore the cost of $X'$ is less than $c(T)$, which is less than $c(S)$. Note that $c(S)$ is the optimal cost, so this is a contradiction.  \qed
\end{proof}

\begin{proposition}[Subadditivity]
	The characteristic cost function of RSP game is \emph{subadditive}, i.e., $c(S)+c(T)\ge c(S\cup T), S,T\subset N, S\cap T = \emptyset$.
\end{proposition}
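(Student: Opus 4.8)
The plan is to mimic the constructive argument used for monotonicity: take optimal route selections for the two coalitions separately and paste them together to obtain a feasible solution for the union, whose cost is exactly $c(S)+c(T)$; minimality of $c(S\cup T)$ then gives the inequality. Concretely, let $X^{S}=\{x_r^{S}\mid r\in R\}$ be an optimal solution to $C(S)$ and $X^{T}=\{x_r^{T}\mid r\in R\}$ an optimal solution to $C(T)$, and consider $x_r := x_r^{S}+x_r^{T}$ for every $r\in R$.

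The steps, in order, are as follows. First I would observe that any route $r$ with $x_r^{S}=1$ covers only players of $S$: the covering constraints of $C(S)$ force $\sum_{r\in R}a_{ir}x_r^{S}=0$ for $i\notin S$, hence $a_{ir}=1\Rightarrow i\in S$; symmetrically, $x_r^{T}=1$ implies the player set of $r$ lies in $T$. Second, since every feasible route begins at its driver's origin and ends at that driver's destination (feasibility condition~1), it covers at least one player; combined with $S\cap T=\emptyset$, no single route can be selected in both $X^{S}$ and $X^{T}$, so $x_r\in\{0,1\}$ for all $r$. Third, I would verify the covering constraints for $C(S\cup T)$: for $i\in S$ we get $\sum_r a_{ir}x_r=1+0=1$, for $i\in T$ we get $0+1=1$, and for $i\notin S\cup T$ we get $0+0=0$, which is precisely $\sum_r a_{ir}x_r=(s\cup t)_i$ for $i\in N$, where $(s\cup t)$ is the incidence vector of $S\cup T$. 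Hence $X:=X^{S}+X^{T}$ is feasible for $C(S\cup T)$. Fourth, its cost is $\sum_{r\in R}c_r x_r=\sum_{r\in R}c_r x_r^{S}+\sum_{r\in R}c_r x_r^{T}=c(S)+c(T)$, and since $c(S\cup T)$ is the minimum over all feasible selections, $c(S\cup T)\le c(S)+c(T)$.

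The only delicate point — the main obstacle, such as it is — is justifying $x_r^{S}+x_r^{T}\le 1$, i.e.\ that the two optimal selections use disjoint sets of routes. This rests on the two facts isolated above (selected routes cover players only from their own coalition, and every feasible route covers at least its driver), together with $S\cap T=\emptyset$. I would also remark that, unlike a single-route argument, this construction is unaffected by whether $|S\cup T|\le Q$, because the characteristic-function program permits any number of routes, so pasting two multi-route solutions together is automatically admissible.
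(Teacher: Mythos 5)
Your proposal is correct and follows essentially the same route as the paper's proof: take optimal route selections for $C(S)$ and $C(T)$, observe that their union is feasible for $C(S\cup T)$ with cost $c(S)+c(T)$, and invoke minimality. You simply make explicit the details the paper leaves implicit (selected routes cover only players of their own coalition, and disjointness of $S$ and $T$ forces the two route sets to be disjoint, so the summed incidence vector stays binary), which is a welcome tightening rather than a different argument.
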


\begin{proof}
	Let $R_S$, $R_T$ be the optimal solution to $C(S)$, $C(T)$, respectively. Because $S, T\subset N$ and $S\cap T = \emptyset$, $R_{S,T} = R_S\cup R_T$ must cover all the players in $S\cup T$, i.e. $R_{S, T}$ is a feasible solution to $C(S\cup T)$. Since this solution has an objective value equal to $c(S)+c(T)$, we have $c(S)+c(T)\ge c(S\cup T)$. \qed
\end{proof}

\end{spacing}

\end{document}